\makeatletter\@addtoreset{chapter}{part}\makeatother%
\newcommand{\OptionalColorBlue}{\ifthenelse{\boolean{bol_SHOWCOLORS}}{\color{ blue}}{\color{black}}}
\newcommand{\OptionalColorViolet}{\ifthenelse{\boolean{bol_SHOWCOLORS}}{\color{violet}\color{black}}{\color{black}}}
\DeclareMathOperator*{\argmax}{argmax} 
\newcommand{\reals} {\mathbb{R}}
\newcommand{\beq} {\begin{equation}}
\newcommand{\eeq} {\end{equation}}
\newcommand{\beqa} {\begin{align}}
\newcommand{\eeqa} {\end{align}}
\newcommand{\spll}{\nonumber \\&}
\newcommand{%
	\scalebox{}{\input{}}
}[2]{%
	\scalebox{#1}{\input{#2}}
}
\newtheorem{theorem}{Theorem}
\newtheorem{lemma}{Lemma}
\newtheorem{definition}{Definition}
\newcommand {\bx} {\mbox{\boldmath $x$}}
\newcommand {\by} {\mbox{\boldmath $y$}}
\newcommand {\bz} {\mbox{\boldmath $z$}}
\newcommand {\bX} {\mbox{\boldmath $X$}}
\newcommand {\bZ} {\mbox{\boldmath $Z$}}
\newcommand{\calC}{{\cal C}}
\newcommand{\calD}{{\cal D}}
\newcommand{\calK}{{\cal K}}
\newcommand{\calN}{{\cal N}}
\newcommand{\calU}{{\cal U}}
\newcommand{\calX}{{\cal X}}
\newcommand{\calY}{{\cal Y}}
\newcommand{\EE}{{\mathbb E}}
\newcommand\mtiny[1]{\mbox{\tiny\ensuremath{#1}}} 
\begin{document}
\title{Pre-Decoder Processing Functions for a DMC with Mismatched Decoding} 


\author{%
    \IEEEauthorblockN{Jonathan Solel and Anelia Somekh-Baruch}
    \IEEEauthorblockA{Faculty of Engineering,
                    Bar-Ilan University\\
                    Ramat-Gan, Israel\\
                    Email: yonis.mailbox@gmail.com, somekha@biu.ac.il}
}

\maketitle
\begin{abstract}
 This paper analyzes the effect of adding a pre-decoder processing
function to a receiver that contains a fixed mismatched decoder at the output of a discrete memoryless channel. 
We study properties of the symbolwise pre-processing function and show that it is a simple yet very powerful tool which enables to obtain reliable transmission at a positive rate for almost every metric. 
We present lower and upper bounds on the capacity of a channel with mismatched decoding and symbolwise(scalar-to-scalar) pre-processing, and show that the optimal pre-processing function for random coding is deterministic. We also characterize achievable error exponents. Finally, we prove that a separation principle holds for vectorwise(vector-to-vector) pre-processing functions and further, that deterministic functions maximize the reliably transmitted rate in this case. 
\let\thefootnote\relax\footnotetext{
This work was supported by the Israel Science Foundation (ISF) under grant \#1579/23.}

\end{abstract}

\section{Introduction}
In the classical channel coding problem it is assumed that in order to maximize the transmitted information rate for a certain channel one can choose an encoder-decoder pair that minimizes the average error probability.
 However, in many cases one may not be able to use the optimal decoder due to practical reasons such as computational or design limitations, and generally not having perfect knowledge of the channel parameters. In those cases one will have to use a sub-optimal decoder instead of the optimal maximum likelihood (ML) decoder w.r.t the channel. Using certain sub-optimal decoders might result in lower achievable rates, the supremum over these rates is referred to as the mismatch capacity of the channel.\\
Extensive research in Information Theory has explored channels featuring a decoder mismatch, providing derivations of lower and upper bounds on the mismatch capacity and error exponents
(see e.g. \cite{Lapidoth93,hui1983fundamental,CsiszarKorner,scarlett2015counter,Lapidoth96,CsiszarNarayan95,SomekhBaruch_mismatchachievableIT2014,ScarlettMartinezFabregas2016,kangarshahi2020singleletter,SomekhBaruch2022singleletter,Anelia_robust21,Sphere_erexp_23,SomekhBaruch2023UpperBoundGenie,somekh2023upper,somekhbaruch2022upper,BigBook}). 
In this paper we address a very practical setup. 
We consider a system which attempts to improve the performance of a given decoder by adding a pre-processing unit prior to it. Simple examples of pre-processing include  scaling and phase shifting. 
 \color{blue}\color{black}\color{black}
This situation may arise as a result of either design constraints or limitations on user access for modifying the receiver. In such cases, exploring alternative solutions other than replacing the receiver becomes crucial.
\color{black}
 A special case of this problem was analyzed in \cite{wang2022generalized} for the Gaussian channel with a mismatched generalized nearest neighbour decoding rule (GNNDR).
A more recent paper \cite{dikshtein2023mismatched}, considered the problem of reliable communication in a point-to-point oblivious-relay communication system with a mismatch at either the relay or at the decoder. The setup considered in \cite{dikshtein2023mismatched} includes three nodes, an encoder, a relay encoder and a mismatched decoder, 
 the relay encoder performs collective processing for all symbols associated with a particular message. Achievability bounds are established in\cite{dikshtein2023mismatched} and a computationally efficient algorithm to compute these bounds is presented.
 
While the setting in \cite{dikshtein2023mismatched} for the relay channel can be viewed as an extension of the vectorwise scheme presented in this paper (see Section \ref{sc: vpf}), the setting provided in Section \ref{sc: Ach} of this paper yields the same achievability bound as in \cite{dikshtein2023mismatched} \color{blue}\color{black}\color{black}for a high relay rate, using \color{black} a simpler scheme that is more latency and memory efficient.

\section{General Problem Formulation}\label{sc: Problem Formulation}
The basic setup discussed in this paper considers communication over a discrete memoryless channel (DMC) defined by a conditional probability distribution $W$ from $\calX$ to $\calY$, which are given finite sets.
The input-output probabilistic relation for an $n$-length codeword is given by:
\begin{flalign}\label{py_gv_x}
 P \left( \by | \bx \right) = W^n \left( \by|\bx\right) = \prod_{i=1} ^{n} ( W\left(y_i|x_i\right) ),
 \end{flalign}
 where  $\bx = \left(x_1 \dots x_n \right) \in \calX^n$ and $\by = \left(y_1 \dots y_n \right) \in \calY^n$ .
The setup consists of an encoder that maps each message $m \in \left\{  1, \dots,M_n \right\}$ to a channel input sequence $\bx_m \in \calX^n$,
and a decoder $\phi_{n,q}$ that maximizes the accumulated value of 
$q(\bx,\by) \triangleq \frac{1}{n}  \sum_{i=1} ^{n} q(x_i,y_i),$
 where $q:\calX \times \calY \longrightarrow \mathbb{R}$ is a mapping referred to as a metric.\\
 For each channel output $\by \in \calY ^n$, the decoder outputs an estimate $\widehat{m}_q \in \left\{  1, \dots,M_n \right\}$ to the transmitted message $m$:
\beq\label{eq:argmax}
\widehat{m}_q(\by) = \argmax_{k\in \{1,\dotsc,M_n\}} q^{}(\bx_k,\by),
\eeq
An error occurs when the decoded message differs from the transmitted one or when there are several maximizers in (\ref{eq:argmax}).
It is easily verified due to (\ref{py_gv_x}) and the monotonicity of the $\log(\cdot)$ function, that the ML decoding metric is given by
$
q_{\mtiny{ML}}(x,y)= \log W(y|x).
$
 \\In this paper we assume that the receiver can add a component prior to the existing $q$-decoder, and therefore is in fact composed of two sequential components:
\\(a) A (possibly stochastic) memoryless configurable symbolwise function $f$ from $\calY$ to $\calY$.
\\(b) The decoder of the form (\ref{eq:argmax}) with a given metric $q$.
\\Let us denote by $Z_i$ the output of the pre-processing function $f$ when fed by the channel output $Y_i$ at time instant $i$.
This scheme is depicted in Figure\ \ref{fig:channel_scheme}:
\begin{figure}[hbt!]
  \centering
  \includegraphics[width=0.48\textwidth]{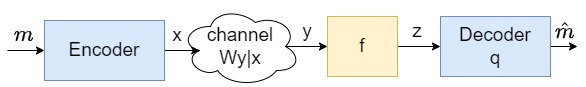}
    \caption[The channel scheme including pre-processing ]{Channel with a mismatched decoder and pre-processing function}
    \label{fig:channel_scheme}
\end{figure}
\\Note that the resulting average probability of error is given by:
\begin{flalign} 
P_e(W,&\calC_n,q,f)= 
\sum_{m=1}^{M_n}\frac{1}{M_n} P(\hat{m}_q(\bz) \neq m|\bX=\bx_m), \nonumber
\end{flalign}
where
$\bz=(z_1,\ldots,z_n)$, and $z_i=f(y_i)$. 
\begin{definition}
A rate $R>0$ is said to be achievable with pre-processing for channel $W$ with decoding metric $q$ if there exists a sequence of codes $\{\calC_n\}_{n\in \mathbb{N} }$, and a possibly stochastic transformation $f$ from $\calY$ to $\calY$
such that $|\calC_n|>e^{nR}$ and the average probability of error vanishes; i.e., 
\begin{flalign}
\lim_{n\rightarrow\infty} \min_{f} P_e(W,\calC_n,q,f)=0.
\end{flalign}
\end{definition}
\begin{definition}
The mismatch capacity of channel $W$ with 
decoding metric $q$ and pre-processing, denoted $C_q^{pre}(W)$, is the supremum of all achievable rates \color{blue}\color{black}\color{black}with pre-processing.\color{black} 
\end{definition}{}
An achievable error exponent and the corresponding reliability function for this case are defined as follows:
 \begin{definition}
For a DMC with a mismatched decoder defined by $(W, q)$ and a symbolwise pre-processing function $f$, we say that $E$ is an achievable error exponent at rate $R$ if, for any $\delta > 0$, there exists a sequence of codebooks $\calC_n$ satisfying
$
| \calC_n | \geq e^{n(R-\delta)}
$
such that
\begin{flalign}
    \limsup_{n\to\infty}\left(- \frac{1}{n} \log P_e(W, \calC_n,q, f) \right) \geq E.
\end{flalign}
\end{definition}
\begin{definition}\label{def:rel_fun}
The reliability function $E_q^{pre}(R,W)$ at rate $R$ for a DMC $W$ with decoding metric $q$ and pre-processing is the supremum over  all achievable error exponents at rate $R$ and all pre-processing functions $f$.
\end{definition}

\section{The relationship between the quantities $C_q^{pre}(W)$, $C_q(W)$, and $C(W)$}
We start by solving a few examples that illustrate different scenarios regarding the relationships between the three quantities $C_q^{pre}(W)$, $C_q(W)$, and the Shannon capacity $C(W)$. Further, we address the question of determining when the use of a pre-processing function can be beneficial. 
We demonstrate the extreme (and rather common) case in which one has $C_q(W)=0$ while $C_q^{pre}(W)>0$; i.e., the pre-processing makes it possible to transmit information over a channel with mismatched decoding in which it was originally impossible to do so. 
To this end, we introduce the notion of a useless metric, and analyze conditions for a metric to be useless. 

\subsection{Examples}\label{sc: Examples}
\subsubsection{A Binary Channel}\label{sbsc: example_A}
		This example demonstrates that a pre-processing function can potentially achieve $C_q^{pre}(W) > C_q(W)  $.
 Assume a binary channel;  i.e., $|\mathcal{X}| = |\mathcal{Y}| = 2$, with channel transition probabilities $W(i|j)$, $i,j\in\{0,1\}$. 
It was shown in \cite[Example 6]{CsiszarNarayan95} that 
for a binary input binary output channel $W$, it holds that if $W(0|1)+W(1|0) < 1$ then \begin{flalign}\label{eq:bin_chan1}
  C&_q(W) = \spll
 \begin{cases} 0 &\mbox{if } q(0,1)+q(1,0) \geq q(0,0) +q(1,1) \\
C(W) &\mbox{if } q(0,1)+q(1,0) < q(0,0) +q(1,1) \end{cases} 
\end{flalign}
and if $W(0|1)+W(1|0) > 1$ then  
\begin{flalign}\label{eq:bin_chan2} &C_q(W)= \spll
\begin{cases} C(W) &\mbox{if } q(0,1)+q(1,0) > q(0,0) +q(1,1) \\
0 & \mbox{if } q(0,1)+q(1,0) \leq q(0,0) +q(1,1) \end{cases}.\end{flalign}

Thus, if we define a pre-processing function 
as: 
if $W(0|1)+W(1|0) < 1$ then  \begin{flalign} f(y) = \begin{cases} 1-y &\mbox{if } q(0,1)+q(1,0) \geq q(0,0) +q(1,1) \\
y & \mbox{if } q(0,1)+q(1,0) < q(0,0) +q(1,1) \end{cases} \nonumber\end{flalign}
 and if $W(0|1)+W(1|0) > 1$ then  \begin{flalign} f(y) = \begin{cases} y &\mbox{if } q(0,1)+q(1,0) > q(0,0) +q(1,1) \\
1-y & \mbox{if } q(0,1)+q(1,0) \leq q(0,0) +q(1,1) \end{cases} \nonumber\end{flalign}
we clearly obtain
\begin{flalign} C_q^{pre}&(W) =\spll \begin{cases} C(W) &\mbox{if } q(0,1)+q(1,0) \neq q(0,0) +q(1,1) \\
0 & \mbox{if } q(0,1)+q(1,0) = q(0,0) +q(1,1) \end{cases} .\nonumber \end{flalign} 
It is interesting to observe that this case results in almost every metric achieving the channel Shannon capacity. 
\subsubsection{A DMC with ``the worst" metric}\label{sbsc: example_ADD}
Generalizing on the previous example, consider the case of a channel $W$ with metric $q(x,y)=-q_{\mtiny{ML}}(x,y)=-\log W(y|x)$ where $\calX=\calY=\{-k,-k+1,...,0,...,k\}$ for some integer $k>0$. Clearly this mismatched decoder errs with very high probability and $C_q(W)=0$. One can easily fix this using the pre-processing function $f(y)=-y$, which implies that $C_q^{pre}(W)=C(W)$. 
\subsubsection{Deterministic Quadratic Channel 1}\label{sbsc: example_B}
The following example demonstrates
a case for which $$C_q(W) = C_q^{pre}(W) < C(W).$$
Assume that $|\calX| = |\calY| =  4$, with the noiseless channel
$ W(y|x) = \mathbbm{1}\big\{ (x,y) \in \left\{(0,0),(1,1),(2,2),(3,3)\right\} \big\}$ where clearly $C(W)=\log 4$. 
Consider the decoding metric:
$q(x,y) = \mathbbm{1}\big\{ (x,y) \in \left\{(0,0),(1,1),(2,2),(3,2)  \right\} \big\}. $
Evidently, 
transmission of symbols $x=2$ and $x=3$ always results in the same metric value. Thus, for any input signals pair ${\bx}_1,{\bx}_2$ s.t ${\bx}_2$ is same as ${\bx}_1$ with the exception of some of the symbols switched from $2$ to $3$ and vice versa, one has $q({\bx}_1,y) = q({\bx}_2,y)$.
Hence, the channel with the decoding metric $q$ is similar to a noiseless ternary channel and $C_q(W)=\log 3$. It is easy to realize that pre-processing cannot fix the decoder's metrics errors, and also $C_q^{pre}(W)=\log 3$.
 
\subsubsection{A Deterministic Quadratic Channel 2}\label{sbsc: example_C}
	The following example demonstrates
 a case for which $$C_q(W) < C_q^{pre}(W) < C(W).$$
Assume that $|\calX| = |\calY| =  4$, with a channel conditional probability $W(y|x) = \mathbbm{1}\{y=(x+2)\mod 4\}$. 
Consider the decoding metric: 
$q(x,y) = \mathbbm{1}\big\{ (x,y) \in \left\{(0,0),(1,1),(2,2),(3,2)  \right\} \big\}. $
Using this decoder no received signal is correctly decoded, thus $C_q(W)= 0$.
However, using pre-processing function $P(z|y)=\mathbbm{1}\{z=(y+2)\mod4\}$,
we get $P(z|x)=\mathbbm{1}\{z=x\}$. 
And thus, as seen in the previous example, we obtain $C_q^{pre}(W)=  \log3$,
demonstrating that, for this example $C_q(W)< C_q^{pre}(W)< C(W).$	

\ifthenelse{\boolean{bol_FULL_ARTICLE_VERSION}}{\color{violet}\color{black}
\subsubsection{A Discrete Input Additive Gaussian Noise Channel}\label{sccc: example_D}

Although the primary focus of this paper revolves around DMCs, this example shows the potential benefit of using pre-processing functions in continuous channels, by demonstrating a setup in which $0=C_q(W)<C_q ^{pre} (W)$.
Let $\calX = \left\{ (0,1), (0,-1) \right\}$, and consider a two dimensional signal $X\in\calX$.
Assume an additive Gaussian noise such that $Y= X +N$, where $N=(N_1 ,0)$ with $N_1 \sim \calN(0,1)$. Let $q(x,y) = -| y_1 -x_1 |^2$, where $x_1$ and $y_1$ denote the values of the first entries of $x$ and $y$ respectively.

Note that this decoder cannot distinguish between the symbols $(0,1)$ and $(0,-1)$, leading to $C_q (W) = 0$. However, by employing a pre-processing function that interchanges the coordinates order such that $z = (y_2, y_1)$ along with the provided decoder, we effectively utilize the Maximum Likelihood (ML) decoder for this channel. Consequently, $C_q ^{pre}(W) = C(W) = 1$.

\color{black}}{}
 
\subsection{Useless metrics}\label{sc: um}

This section provides necessary and sufficient 
conditions under which $C_q^{pre}(W)=0$ for all channels $W$; that is, when even pre-processing cannot correct a poor choice of decoding metric $q$ regardless of the channel. To this end, we introduce the following definition:
\begin{definition}
    We say that a metric $q$ is {\it useless for channel} $W$ with pre-processing if $C_q^{pre}(W) = 0 $. 
    We say that a metric $q$ is useless, if for all $W$, $C_q^{pre}(W)=0$. 
\end{definition}
The following theorem specifies conditions for a metric to be useless.
 \begin{theorem}\label{th:useless}
There exists a channel $W$ from $\calX$ to $\calY$ s.t $C_q^{pre}(W) >0$ if and only if $\exists x_1,x_2 \in \calX,  y_1,y_2 \in \calY$ s.t.
\begin{flalign} \label{eq:useless_cond}
q( x_1,y_1)-q( x_2,y_1) \neq  q( x_1,y_2)-q( x_2,y_2).
\end{flalign}
 \end{theorem}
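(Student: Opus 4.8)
The plan is to prove both directions by reducing to a single-letter (one-shot) channel analysis and invoking the known characterization of positivity of the mismatch capacity. For the "if" direction, suppose there exist $x_1,x_2,y_1,y_2$ satisfying \eqref{eq:useless_cond}. I would construct an explicit channel $W$ together with a pre-processing function $f$ so that the resulting effective channel, with metric $q$, has positive mismatch capacity. The natural construction is to restrict attention to the two input symbols $x_1,x_2$ and design $W$ so that input $x_1$ produces one output and $x_2$ produces another, essentially making the channel noiseless on $\{x_1,x_2\}$; then choose $f$ mapping the two channel outputs into $\{y_1,y_2\}$ in the way that best separates the metric values. Concretely, one assigns $f$ so that the "diagonal" accumulated metric $q(x_1,\cdot)+q(x_2,\cdot)$ strictly exceeds the "cross" term $q(x_1,\cdot')+q(x_2,\cdot')$, which is possible precisely because \eqref{eq:useless_cond} says the metric differences $q(x_1,y_j)-q(x_2,y_j)$ are not all equal — after possibly swapping the roles of $y_1$ and $y_2$ we get the strict inequality needed. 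By the binary-channel analysis recalled in \eqref{eq:bin_chan1}, this effective binary channel has $C_q(\cdot) = C(\cdot) = \log 2 > 0$, hence $C_q^{pre}(W)>0$.

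For the "only if" direction, suppose that \eqref{eq:useless_cond} fails, i.e. $q(x_1,y_1)-q(x_2,y_1) = q(x_1,y_2)-q(x_2,y_2)$ for all $x_1,x_2\in\calX$ and $y_1,y_2\in\calY$. The key observation is that this functional equation forces $q$ to be additively separable: $q(x,y) = a(x) + b(y)$ for some functions $a:\calX\to\reals$, $b:\calY\to\reals$ (fix reference symbols $x_0,y_0$ and set $a(x) = q(x,y_0)-q(x_0,y_0)$, $b(y)=q(x_0,y)$; the hypothesis gives $q(x,y)-q(x,y_0) = q(x_0,y)-q(x_0,y_0)$, which rearranges to the claim). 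Then for any codewords $\bx_k$ and any received (post-processing) sequence $\bz$, the accumulated metric is $\sum_i q(x_{k,i},z_i) = \sum_i a(x_{k,i}) + \sum_i b(z_i)$, and the second term does not depend on $k$. Hence the $q$-decoder's decision depends only on $\sum_i a(x_{k,i})$, a quantity fixed by the codeword alone and independent of the channel output. The decoder therefore always outputs the same message (the one maximizing $\sum_i a(x_{k,i})$, or declares an error if there are ties), so for any code with more than one codeword the error probability is bounded below by $1 - 1/M_n \to 1$, and no positive rate is achievable for any $W$ and any $f$. Thus $C_q^{pre}(W)=0$ for all $W$.

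The main obstacle is the careful bookkeeping in the "if" direction: one must verify that the chosen $f$ and $W$ indeed yield a strict separation of diagonal versus cross metric sums (handling the case analysis on the sign of $W(0|1)+W(1|0)-1$ as in \eqref{eq:bin_chan1}–\eqref{eq:bin_chan2}), and confirm that $f$ as defined is a legitimate (possibly stochastic, here deterministic) map from $\calY$ to $\calY$ — in particular that the two designated outputs of $W$ can be taken inside $\calY$ and that $f$ need only be defined consistently on those. A minor subtlety is that $|\calX|$ or $|\calY|$ could exceed $2$, so the reduction is to a sub-channel; this is harmless since we are free to choose $W$ to ignore the other symbols. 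The "only if" direction is essentially a clean algebraic identity plus a one-line decoding argument, so I expect it to be routine once the separability representation $q(x,y)=a(x)+b(y)$ is extracted.
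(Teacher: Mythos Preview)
Your proposal is correct and follows essentially the same route as the paper's proof. For the ``if'' direction, both you and the paper construct a noiseless binary sub-channel on $\{x_1,x_2\}\to\{y_1,y_2\}$ and invoke the binary mismatch-capacity formulas \eqref{eq:bin_chan1}--\eqref{eq:bin_chan2}; the paper simply lets $W$ output $y_1,y_2$ directly (so $f$ is the identity), while you route through an intermediate output and an explicit $f$, which is an equivalent but slightly more roundabout construction. For the ``only if'' direction, your extraction of the additive separability $q(x,y)=a(x)+b(y)$ is a clean repackaging of exactly the computation the paper carries out combinatorially (the paper computes $q(\bx_0,\by)-q(\bx_1,\by)$ and shows it depends only on the compositions of $\bx_0,\bx_1$, which is precisely what separability yields); both arguments land on the same conclusion that the decoder's ranking is independent of the received sequence.
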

 To prove this we use \cite[Example 6]{CsiszarNarayan95} to show that the condition in (\ref{eq:useless_cond}) is a sufficient and necessary for a metric to not be useless. See \ifthenelse{\boolean{bol_FULL_ARTICLE_VERSION}}{Appendix B }{\color{blue}\color{black}\color{black}\cite[Appendix B]{ThisArticle_arxiv}\color{black} }\ for the proof.

\ifthenelse{\boolean{bol_FULL_ARTICLE_VERSION}}{\color{violet}\color{black}
Let $R_{LM}(W)$ denote the LM rate achieved by random constant composition coding for a given channel; that is,
\begin{flalign}\label{def:RLM_W_np}
R_{LM}(W)&\triangleq \max_{\substack{P_{X}} } \min_{\substack{P_{X'|Y}:P_{X'}=P_X  \\ \EE q(X',Y)\geq \EE q(X,Y)}}I(X';Y), 
\end{flalign}
The following condition for a metric to be useless for a given channel, is derived directly from \cite[lemma 1]{CsiszarNarayan95}.
\begin{lemma}\label{th: useless for W}
    For a given channel $W$ from $\calX$ to $\calY$, $C_q^{pre}(W) >0$ iff there exists $P_{Z|Y}$ from $\calY $ to $\calY$ s.t $R_{LM}(P_{Z|X}) >0$, where $P_{Z|X}(z|x)= \sum_{y \in \calY} P_{Z|Y}(z|y)W(y|x)$.
\end{lemma}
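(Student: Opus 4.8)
The plan is to reduce the pre-processing problem to an ordinary (no pre-processing) mismatched-decoding problem over an \emph{effective} channel, and then to invoke the single-letter positivity criterion of \cite{CsiszarNarayan95}. The guiding observation is that cascading the DMC $W$ with a memoryless pre-processing kernel $P_{Z|Y}$ again yields a DMC, namely $P_{Z|X}(z|x)=\sum_{y\in\calY}P_{Z|Y}(z|y)W(y|x)$, and that passing its output into the fixed $q$-decoder is nothing but ordinary $q$-decoding over $P_{Z|X}$. The whole lemma then follows once we establish the positivity equivalence
\begin{equation}\label{eq:posequiv}
C_q^{pre}(W)>0 \quad\Longleftrightarrow\quad \exists\, P_{Z|Y}:\ C_q(P_{Z|X})>0 ,
\end{equation}
and combine it with the fact, drawn from \cite{CsiszarNarayan95}, that for every DMC $\tilde W$ one has $C_q(\tilde W)>0$ iff $R_{LM}(\tilde W)>0$ (the LM rate is an achievable lower bound and its positivity is also necessary for a positive mismatch capacity). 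Applying this last equivalence with $\tilde W=P_{Z|X}$ converts \eqref{eq:posequiv} verbatim into the lemma statement.

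\emph{The effective-channel reduction and the easy direction.} Fix any stochastic $P_{Z|Y}$ and use it as pre-processing. Conditioned on $\bX=\bx$, the pair $(\by,\bz)$ is distributed as $\prod_{i=1}^n W(y_i|x_i)P_{Z|Y}(z_i|y_i)$, so $\bz$ is exactly the output of the memoryless channel $P_{Z|X}^n$ driven by $\bx$. Since the decoder observes only $\bz$ and applies the rule \eqref{eq:argmax} with the metric $q$, the quantity $P_e(W,\calC_n,q,P_{Z|Y})$ equals the average $q$-decoding error probability of $\calC_n$ over the DMC $P_{Z|X}$. Hence every rate below $C_q(P_{Z|X})$ is achievable with this single fixed pre-processing, which proves the $\Leftarrow$ direction of \eqref{eq:posequiv}: if $C_q(P_{Z|X})>0$ for some $P_{Z|Y}$, then $C_q^{pre}(W)>0$. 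For assembling the forward implication of the lemma I also use the elementary fact that if a supremum of nonnegative numbers is positive then one of them is positive, so $C_q^{pre}(W)>0$ forces the existence of a witnessing kernel in \eqref{eq:posequiv}.

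\emph{Main obstacle.} The delicate point is the $\Rightarrow$ direction of \eqref{eq:posequiv}: passing from achievability to a single, block-length-independent kernel. As worded, the achievability definition optimizes the pre-processing \emph{inside} the limit (via $\min_f$), so a priori the optimal $f$ could depend on $n$, and the minimum is over the compact set of stochastic kernels rather than a finite set. I would resolve this by a compactness argument coupled with the converse half of \cite{CsiszarNarayan95}: assume $R_{LM}(P_{Z|X})=0$, equivalently $C_q(P_{Z|X})=0$, for \emph{every} kernel $P_{Z|Y}$; then all effective channels lie in a compact family of zero-mismatch-capacity DMCs, and a uniform converse over this family rules out any positive achievable rate even under block-length-dependent pre-processing, giving $C_q^{pre}(W)=0$. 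The continuity of the map $P_{Z|Y}\mapsto R_{LM}(P_{Z|X})$ (composition with $W$ is linear, and $\tilde W\mapsto R_{LM}(\tilde W)$ is continuous as a max--min of mutual information over compact sets) is what makes the uniformity available.

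I expect this uniform-converse step to be the hardest part, since it must control the best $n$-dependent pre-processing simultaneously over a continuum of effective channels; everything else (the cascade identity and the pointwise invocation of \cite{CsiszarNarayan95}) is routine. I note that if the definition is instead read with a single fixed pre-processing function $f$ (chosen once, independent of $n$), the converse of \eqref{eq:posequiv} becomes immediate from the reduction above and no compactness argument is needed; the pre-processing witnessing $C_q^{pre}(W)>0$ is then literally the kernel for which positive rate is achievable over $P_{Z|X}$.
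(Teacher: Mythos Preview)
Your approach is essentially the paper's: reduce to the effective DMC $P_{Z|X}$ and invoke the positivity criterion $C_q(\tilde W)>0\iff R_{LM}(\tilde W)>0$ from \cite{CsiszarNarayan95}. The paper's proof is two lines: forward, $C_q^{pre}(W)>0$ means some fixed $P_{Z|Y}$ gives $C_q(P_{Z|X})>0$, hence $R_{LM}(P_{Z|X})>0$; backward, $C_q^{pre}(W)\geq R_{LM}(P_{Z|X})$ for every $P_{Z|Y}$.

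The one place you diverge is the forward direction, where you worry about an $n$-dependent kernel and propose a compactness/uniform-converse argument. In the paper's setup this concern does not arise: the prose of the achievability definition (``there exists \ldots a possibly stochastic transformation $f$ from $\calY$ to $\calY$'') fixes a single symbolwise kernel once and for all, and the paper's proof simply uses that witnessing kernel. Your final remark already anticipates this reading, so under the paper's intended definition your compactness step is unnecessary. That said, your observation that the displayed $\min_f$ inside the limit could be read as allowing $n$-dependent pre-processing is a fair one; if one insists on that stronger reading, your compactness/continuity sketch is a reasonable route, though the ``uniform converse over a compact family of zero-mismatch-capacity channels'' would still need to be justified carefully (it is not an immediate consequence of \cite{CsiszarNarayan95}).
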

\begin{proof}
Let $\calX,\calY$, $q:\calX\times\calY\rightarrow\mathbb{R}$ and $W$ from $\calX$ to $\calY$ be given.
Assume that $C_q^{pre}(W) >0$, thus there exists a (possibly stochastic) pre-processing function with transition probabilities $P_{Z|Y}$, s.t. $C_q(P_{Z|X}) >0$ where $P_{Z|X}(z|x)= \sum_{y \in \calY} P_{Z|Y}(z|y)W(y|x)$, and
according to \cite[lemma 1]{CsiszarNarayan95} for every channel $W$ the mismatch capacity satisfies $C_q(W)>0 \iff R_{LM}( W)>0$, 
thus, $R_{LM}(P_{Z|X}) >0$.
Now, $\forall P_{Z|Y}$, $C_q^{pre}(W) \geq R_{LM}(P_{Z|X})$, and  therefore if $C_q^{pre}(W)=0$ then $\forall P_{Z|Y} : R_{LM}(P_{Z|X}) = 0$.
\end{proof}
\color{black}}{}

\section{Achievable Rates and Error Exponents}\label{sc: Ach}
 This section discusses the use of a symbolwise pre-processing function $f$ for a DMC. We begin by presenting an achievability bound for the rate and error exponent using constant-composition random coding. In Section \ref{ap: Ach_sp} an improvement of the achievability bound is presented using superposition coding.
\subsection{Main Achievability Theorem}\label{ap: Ach_mt}
In this section a lower bound on $C_q^{pre} (W)$ 
is derived. 
Denote
\begin{flalign}\label{def:RLM_PX}
R^{pre}_{LM,q}(P_X,W)&\triangleq \max_{\substack{P_{Z|Y}} } \min_{\substack{P_{X'|Z}:P_{X'}=P_X  \\ \EE q(X',Z)\geq \EE q(X,Z)}}I(X';Z), 
\end{flalign} 
where $\EE(\cdot)$ denotes expectation, the maximization is over $P_{Z|Y}$ from $\calY$ to $\calY$, and the minimization is over $P_{X'|Z}$ from $\calX$ to $\calY$, satisfying the specified constraints, where $(X,Y,Z)\sim P_X\times W\times P_{Z|Y}$, i.e., $X-Y-Z$ is a Markov chain. 
The \color{blue}\color{black}\color{black}rate \color{black}$R^{pre}_{LM,q}(P_X,W)$ can be viewed as a lower bound on the achievable rate using a codebook with fixed composition \color{blue}\color{black}\color{black}$P_X$\color{black}. Further denote
\begin{flalign}\label{def:RLM}
R^{pre}_{LM,q}(W)&\triangleq \max_{\substack{ P_X } } R^{pre}_{LM,q}(P_X,W),
\end{flalign} 
and let
\begin{flalign}\label{def:ELM}
E&_{LM,q} ^{pre}(R,W) \triangleq\spll \max_{P_{Z|Y},P_X}\bigg( \min_{\substack{V_{X'Z}:E_{V_{X'Z}} \left[ q(X',Z) \right]\geq E_{V_{XZ}} \left[ q(X,Z) \right]\\V_{X'} = V_{X}=P_X  }} \spll \left[ D(V_{XZ}\| P_{XZ})+   \left| I_V(X';Z)-R \right|_+ \right] \bigg),
\end{flalign} 
where, similar to (\ref{def:RLM}),  $(X,Y,Z)\sim P_X\times W\times P_{Z|Y}$.
Here we use the shorthand notation for the function $|t|_+=\max\{0,t\}$. 
Our first result is the following. 
\begin{theorem}\label{th: ach1}
For a DMC $W$ and metric $q$ it holds that
\begin{flalign}
    &C_q^{pre}(W)\geq R^{pre}_{LM,q}(W),\label{eq: LM bound for pre-processing}\\
  &E_q^{pre}(R,W)\geq E_{LM,q}^{pre}(R,W)\label{eq: reliabiity bound for pre-processing}
  \end{flalign}
 \end{theorem}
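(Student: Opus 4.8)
The plan is to reduce both claims to the known matched/mismatched analysis of constant-composition random coding applied to the \emph{effective} channel obtained by composing $W$ with a chosen pre-processing transition kernel $P_{Z|Y}$. First I would fix a composition $P_X$ and a (deterministic or stochastic) $f\sim P_{Z|Y}$, and set $P_{Z|X}(z|x)=\sum_{y}P_{Z|Y}(z|y)W(y|x)$, so that $X-Y-Z$ is Markov with joint law $P_X\times W\times P_{Z|Y}$. After the pre-processor, the receiver sees a DMC $P_{Z|X}$ with fixed metric $q(x,z)$, and the decoder is exactly the $q$-decoder of the form (\ref{eq:argmax}) acting on $\bz$. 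Hence the scheme is literally the LM (constant-composition) mismatched-decoding scheme on the channel $P_{Z|X}$, and the classical LM achievability bound (Hui / Csisz\'ar--Korner; cf.\ the discussion after (\ref{def:RLM_W_np})) yields that the rate
\[
\min_{\substack{P_{X'|Z}:P_{X'}=P_X\\ \EE q(X',Z)\geq \EE q(X,Z)}} I(X';Z)
\]
is achievable over $P_{Z|X}$ with vanishing error probability. Taking the supremum over $f$ (i.e.\ over $P_{Z|Y}$) and then over $P_X$ gives $C_q^{pre}(W)\ge R^{pre}_{LM,q}(W)$, since Definition~2 permits optimizing $f$.

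For the exponent bound (\ref{eq: reliabiity bound for pre-processing}) I would invoke the standard cost-constrained/constant-composition random-coding exponent for mismatched decoding applied to $P_{Z|X}$: with codewords drawn uniformly over the type class of $P_X$, a union bound over the pairwise error events together with the method of types gives an achievable exponent
\[
\min_{\substack{V_{X'Z}:E_{V_{X'Z}}[q(X',Z)]\geq E_{V_{XZ}}[q(X,Z)]\\ V_{X'}=V_X=P_X}} \big[\, D(V_{XZ}\|P_{XZ}) + |I_V(X';Z)-R|_+\,\big],
\]
where $P_{XZ}=P_X\times P_{Z|X}$; the two terms come respectively from the probability that the transmitted codeword has conditional type $V_{Z|X}$ near the atypical $V$, and from the expected number of competing codewords whose joint type with $\bz$ has no worse $q$-score. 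Maximizing over $P_{Z|Y}$ and $P_X$ yields $E_{LM,q}^{pre}(R,W)$, and since this is achievable for every $\delta>0$ with codebook size $e^{n(R-\delta)}$, it is an achievable error exponent in the sense of Definition~3, whence $E_q^{pre}(R,W)\ge E_{LM,q}^{pre}(R,W)$ by Definition~\ref{def:rel_fun}.

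The only genuinely non-routine point is making the composition argument fully rigorous when $f$ is \emph{stochastic}: the effective channel $P_{Z|X}$ is still a DMC (the pre-processor is memoryless and applied symbol-by-symbol, independently across $i$), so $P(\bz|\bx)=\prod_i P_{Z|X}(z_i|x_i)$, and the $q$-decoder sees i.i.d.\ channel uses just as in the standard LM setup; one must note that the random pre-processing noise can be absorbed into the channel and does not interact with the codebook randomness, so the usual random-coding averaging goes through verbatim. The rest is a direct citation of the LM rate and exponent; I do not expect to need any new combinatorial estimate. A secondary minor point is checking that the constraint set $P_{X'|Z}:P_{X'}=P_X$ in (\ref{def:RLM_PX}) matches the constant-composition constraint used in the LM analysis — it does, because both transmitted and competing codewords share the composition $P_X$ — and that the expectations in (\ref{def:RLM_PX})--(\ref{def:ELM}) are all taken under the induced law $P_X\times P_{Z|X}$, which is the marginal of $P_X\times W\times P_{Z|Y}$ on $(X,Z)$.
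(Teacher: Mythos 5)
Your proposal is correct and follows essentially the same route as the paper: the paper also observes that the memoryless pre-processor composed with $W$ yields an effective DMC $P_{Z|X}$, invokes the known LM rate and exponent achievability results (Csisz\'ar--K\"orner, Hui, and \cite[Theorem 2.3]{BigBook}) for that channel with metric $q$, and then maximizes over $P_{Z|Y}$ and $P_X$. Your added care about absorbing the stochastic pre-processing into the channel is a fair point but matches the paper's (implicit) reasoning, so nothing is missing.
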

 Note that adding a pre-processing function effectively creates a DMC from $X$ to $Z$,
and hence, the bounds (\ref{eq: LM bound for pre-processing})-(\ref{eq: reliabiity bound for pre-processing}) can be obtained quite straightforwardly from previous results (e.g.\ \cite{CsiszarKorner,hui1983fundamental}, and \cite[Theorem 2.3]{BigBook}) by substituting the channel $W$ with $P_{Z|X}$ and maximizing the obtained expression with respect to $P_{Z|Y}$.

Examining (\ref{eq: LM bound for pre-processing}) and constraining the pre-processing function $P_{Z|Y}$ to satisfy $I(Y;Z)\leq B$ results in an achievable rate:
\begin{flalign}\label{eq: LM bound with I<B}
\max_{\substack{P_{Z|Y}, P_X} } \min_{\substack{P_{X'|Z}:P_{X'}=P_X  \\ \EE q(X',Z)\geq \EE q(X,Z) \\ I(Y;Z)\leq B}} I(X';Z).
\end{flalign}
This achievable rate coincides with the achievability bound presented in  \cite{dikshtein2023mismatched} \color{blue}\color{black}\color{black}. Note that symbolwise pre-processing is an applicable scheme 
for the setup of \cite{dikshtein2023mismatched} in the case of high relay rate $B$. Therefore, symbolwise pre-processing offers a memory and latency efficient solution compared to a relay block function in this case. \color{black}


\subsection{Superposition Coding}\label{ap: Ach_sp}
The above mentioned argument following Theorem \ref{th: ach1}, can also be used to show that superposition coding can improve on Theorem \ref{th: ach1} and provide a tighter achievable bound on $ C_q^{pre}(W)$.
For example, using \cite[Theorem 7]{ScarlettMartinezFabregas2016} we get that for a given pre-processing function $P_{Z|Y}$, for any finite set $ \calU $ and input distribution $Q_{UX}$, and for a set of rates $ R_0(P_{Z|Y}), \{ R_{1,u}(P_{Z|Y}) \}_{u=1} ^{|\calU|}$, which depend on \ifthenelse{\boolean{true}}{the pre-processing function }{ }$P_{Z|Y}$,  
the rate 
\begin{flalign}R(P_{Z|Y})= R_0(P_{Z|Y}) +\sum_{u\in \calU}{Q_U(u)R_{1,u}(P_{Z|Y})} 
\end{flalign} 
is achievable provided that:
\begin{flalign}
    R_{1,u}&(P_{Z|Y})\leq R^{pre}_{LM,q}(Q_{X|U=u},P_{Z|U=u})
\end{flalign}
where $P_{XZ|U=u}(x,z|u) = \sum_{y \in \calY} Q_{X|U=u}(x|u) \cdot W(y|x) \cdot P_{Z|Y}(z|y)$, $P_{Z|X}(z|x) = \sum_{y\in \calY} W(y|x) \cdot P_{Z|Y}(z|y)$, and 
\begin{flalign}
R_0(P_{Z|Y})& \leq \min_{\substack{P_{U'X'Z'} \in \tau_0 (Q_{UX} \times P_{Z|X})}}{I(U';Z')} + \spll | \max_{\calK \subset \calU}{\sum_{u \in \calK}{Q_U(u)I(X';Z|U=u) - R_{1,u}}  } |_{+},\nonumber    
\end{flalign}
where $\tau_0$ is defined as 
\begin{flalign}
\tau_0 &(P_{UXZ}) = \bigg\{ {P}_{U'X'Z'} : {P}_{U'X'} = P_{UX}, {P}_Z' = P_Z, \spll \EE_{{P}_{X'Z'}} [\log q(X,Z)] \geq \EE_{{P}_{XZ}} [\log q(X,Z)] \bigg\}.
\end{flalign}
Note that, by maximizing over $P_{Z|Y}$, one can 
show that
$
R = \max_{P_{Z|Y}}R(P_{Z|Y})
$
is achievable.
\section{An Upper Bound on the Mismatch Capacity with Pre-Processing function}\label{sc: ob}
In this section an upper bound on $C_{q}^{pre}(W)$ is presented.
We use the setup described in Section \ref{sc: Problem Formulation}, and the technique for deriving an upper bound using a surely-degraded broadcast channel $W_{Y\widetilde{Y}|X}$ described in \cite{SomekhBaruch2022singleletter}.

The derivation is similar to that of \cite{SomekhBaruch2022singleletter}, in which each of the original and the auxiliary receivers will use a separate decoder, with additive metrics $q$ and $\rho$ accordingly.
 However, only the original receiver goes through a pre-processing function $f$ as can be seen in Figure \ref{fig:broadcast_up}.
\begin{figure}[htbp]
  \centering
  \includegraphics[width=0.48\textwidth]{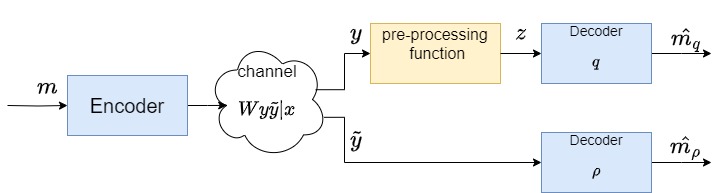}
    \caption[The channel scheme including pre-processing ]{A broadcast channel with mismatched decoders $q, \rho$ and a pre-processing function}
    \label{fig:broadcast_up}
\end{figure}
\\The input for the auxiliary decoder will be denoted as $\widetilde{Y}$ over alphabet 
$\calY$. As in \cite{SomekhBaruch2022singleletter} we define: 
 \begin{flalign}
 \Gamma& (q,\rho) = \{ P_{Z \widetilde{Y}|X}: \forall(x,z,\widetilde{y}):\spll\rho(x,\widetilde{y}) -q(x,z) < \tau_{q,\rho}(z,\widetilde{y}) =>P_{Z\widetilde{Y}|X}(z,\widetilde{y}|x) = 0 \}, \nonumber
 \end{flalign}
and
 $\tau_{q,\rho}(z,\widetilde{y}) = \max_{x' \in \cal{X} } [\rho(x',\widetilde{y}) - q(x',z)].$
 Let us consider a multicast transmission of a single message $m$ over the broadcast channel with outputs $Z,\widetilde{Y}$ and a probability function $P_{Z \widetilde{Y}|X} \in \Gamma(q,\rho)$ which satisfies 
 \begin{flalign}\label{38}
 P_{Z|X}(z|x) &= 
 \sum_{y}{P_{Z|Y}(z|y)W(y|x)} \triangleq W_{Z|X}(z|x)\end{flalign}
 It is easily verified, by definition of $\Gamma(q,\rho)$, that if the $q$ decoder at the $Z$ output successfully decodes the message, then also does the $\rho$ decoder at the $\widetilde{Y}$ output. 
\iffalse
Now, By definition of $\Gamma(q,\rho)$, if $P_{Z\widetilde{Y}|X}(\bz_i,\widetilde{\bz_i}|\bx_i) >0 $, then 
 \begin{flalign}\forall x' \in \calX : \rho(x_{i},\widetilde{y}_{i})-q(x_i,z_i) \geq \rho({x'},\widetilde{z_i})-q({x'},z_i).
 \end{flalign}
 Thus, since the channel is memoryless and the metrics are additive, if $P^n_{Z,\widetilde{Y}|X}(\bz,\widetilde{\bz}|\bx) > 0$, then $\forall \bx' \in \calX^n $ we get: \begin{flalign}\rho(\bx,\widetilde{\bz})-\rho({\bx'},\widetilde{\bz})\geq q(\bx,\bz)-q({\bx'},\bz).\end{flalign} 
Thus, letting $\calC_n = \{\bx_j\}, j \in \{1 \dots e^{nR}\}$ be a given codebook, we have that if $P^n_{Z\widetilde{Y}|X}(\bz,\widetilde{\bz}|\bx_m) > 0$ for some $\bx_m \in \calX^n$, then $\forall j \in \{1,..e^{nR}\}$: 
\begin{flalign} &\rho(\bx_m,\widetilde{\bz})-\rho({\bx_j},\widetilde{\bz})\geq q(\bx_m,\bz)-q({\bx'},\bz).
\end{flalign}
Taking the minimum over $j \neq m$ on both sides of the inequality we get that \\if $P_{Z\widetilde{Y}|X}(\bz,\widetilde{\bz}|\bx_m) > 0$, then:
\begin{flalign} 
&\rho(\bx_m,\widetilde{\bz})-\max_{j \neq m}\rho({\bx_j},\widetilde{\bz})\geq q(\bx_m,\bz)-\max_{j \neq m}q({\bx'},\bz).
\end{flalign}
This implies that given that $\bx_m$ is transmitted, if the received $\bz$ is s.t $ q(\bx_m,\bz) > \max_{j \neq m}q({\bx'},\bz)$, then necessarily $\widetilde{\bz}$ is s.t $\rho(\bx_m,\widetilde{\bz}) > \max_{j \neq m}\rho({\bx_j},\widetilde{\bz}) $ holds.
 And therefore, the error event of the $\rho$-decoder applied to channel output $\widetilde{Y}$ is contained in the error event of the  $q$-decoder applied to channel output $Z$. Thus:
 \begin{flalign}
 \forall n, \Pr (\widehat{m}_q(\bZ)=M, \widehat{m}_\rho(\widetilde{\bZ}) \neq M ) = 0
 \end{flalign}
 and consequently 
 \begin{flalign}
 C_q(W) \leq \min_{P_{Z\widetilde{Y}|X} \in \Gamma(q,\rho):P_{Z|X} = W_{Z|X}}C_\rho (P_{\widetilde{Y}|X}),
 \end{flalign}
 where  $W_{Z|X}$ is defined in (\ref{38}).
 Now, note that any additive $\rho$ gives a valid bound, so in fact, the bound can be expressed as 
 \begin{flalign}
 C_q(W) &\leq \min_{\substack{
 P_{Z\widetilde{Y}|X}:P_{Z\widetilde{Y}|X} \in \Gamma (q,\rho):  \\
 P_{Z|X} = W_{Z|X}}
 } C_{\rho} (P_{\widetilde{Y}|X})
 \\&\leq \min_{\substack{
 P_{Z\widetilde{Y}|X}:P_{Z\widetilde{Y}|X} \in \Gamma (q,\rho):  \\
 P_{Z|X} = W_{Z|X}}
 } C (P_{\widetilde{Y}|X}).
 \\& = \min_{\substack{
 P_{Z\widetilde{Y}|X}:P_{Z\widetilde{Y}|X} \in \Gamma (q,\rho):  \\
 P_{Z|X} = W_{Z|X}}
 }\max_{P_X} I(X;\widetilde{Y})
 \end{flalign}
 and as in \cite{SomekhBaruch2022singleletter}, since 
 $ \{\substack{
 P_{Z\widetilde{Y}|X}:P_{Z\widetilde{Y}|X} \in \Gamma (q,\rho):  \\
 P_{Z|X} = W_{Z|X}} \} $is a convex set, and since $ I(X;\widetilde{Y})$ is concave in $P_X$ for fixed $P_{\widetilde{Y}|X}$ and convex in  $P_{\widetilde{Y}|X}$ for a fixed $P_X$:
 \fi
Therefore (similar to \cite[Theorem 1]{SomekhBaruch2022singleletter}, when substituting the channel $W$ with $P_{Z|X}$) optimizing over the pre-processing function $f$ which may be random (and therefore denoted $P_{Z|Y}$) we obtain:
 \begin{flalign} 
C_q^{pre}(W) &\leq \max_{P_{Z|Y}} \min_{\substack{
 P_{Z\widetilde{Y}|X} \in \Gamma (q,\rho):  \\
 P_{Z|X} = W_{Z|X}}
 } C_{\rho}(P_{\widetilde{Y}|X}),
 \end{flalign}
  where $W_{Z|X}$ is defined in (\ref{38}) and depends on $P_{Z|Y}$.
 Since the bound holds for any additive $\rho(x,\widetilde{y})$, let us choose $ \rho(x,\widetilde{y}) = \log{P_{\widetilde{Y}|X} (\widetilde{y}|x)}$ in which case $C_\rho(P_{\widetilde{Y}|X}) = C(P_{\widetilde{Y}|X})$. 
 Consequently, 
 \begin{flalign}
 C_q^{pre}(W) &\leq\max_{P_{Z|Y}} \min_{\substack{
 P_{Z\widetilde{Y}|X} \in \Gamma (q,\rho):  \\
 P_{Z|X} = W_{Z|X}}
 } C (P_{\widetilde{Y}|X}).
 \\&=\max_{P_X,P_{Z|Y}} \min_{\substack{
 P_{Z\widetilde{Y}|X} \in \Gamma (q,\rho):  \\
 P_{Z|X} = W_{Z|X}}
} I(X;\widetilde{Y}).
 \end{flalign}

 Note that also sufficient conditions for the tightness of the bound, and for a metric to be capacity-achieving can be deduced from \cite[Theorem $3$, and Corollary 2]{SomekhBaruch2022singleletter}. Moreover, the upper bounds from \cite{Sphere_erexp_23} and \cite{SomekhBaruch2023UpperBoundGenie} can also be extended to this setup with the adjustment of optimizing over $P_{Z|Y}$ and substituting $W_{Y|X}$ with $W_{Z|X}$ as defined in (\ref{38}).
\section{An Optimal Pre-Processing Function which Maximizes the LM Rate}\label{ap: oppf} 
This section shows that among the symbolwise pre-processing functions there is {\it a deterministic} pre-processing function that achieves maximal rate with random coding.
This can be advantageous as there are finitely many deterministic pre-processing functions 
compared to the simplex of stochastic pre-processing functions rendering the optimization problem  significantly simpler for deterministic functions. 
For that purpose, let us define the following:
\begin{flalign}\label{def:RLM_f}
R^{pre}_{LM,q}&(W,P_{Z|Y})\triangleq \max_{\substack{P_X } } \min_{\substack{P_{X'|Z}:P_{X'}=P_X  \\ \EE q(X',Z)\geq \EE q(X,Z)}}I(X';Z).
\end{flalign} 
where, similar to (\ref{def:RLM}),  $(X,Y,Z)\sim P_X\times W\times P_{Z|Y}$. 

\begin{theorem}\label{th:deter_RLM}
There exists a deterministic pre-processing function $f: \calY \rightarrow \calY$ s.t.
\begin{flalign}
    f= \argmax_{P_{Z|Y}} R_{LM,q} ^{pre} (W,P_{Z|Y}).
\end{flalign}
 \end{theorem}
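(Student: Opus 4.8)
The plan is to show that the maximization over stochastic pre-processing kernels $P_{Z|Y}$ in $R^{pre}_{LM,q}(W,P_{Z|Y})$ is attained at an extreme point of the polytope of channels from $\calY$ to $\calY$, and that these extreme points are precisely the deterministic functions $f:\calY\to\calY$. The key structural fact I would exploit is that, for a \emph{fixed} input distribution $P_X$ and a fixed kernel $P_{Z|Y}$, the induced channel $P_{Z|X}(z|x)=\sum_y P_{Z|Y}(z|y)W(y|x)$ depends \emph{linearly} on $P_{Z|Y}$, and that the relevant functionals — the constraint $\EE q(X',Z)\geq\EE q(X,Z)$ and the objective $I(X';Z)$ after the inner minimization — behave well under this linear parametrization. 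More precisely, I would first fix $P_X$ and argue that the map $P_{Z|Y}\mapsto \min_{P_{X'|Z}}I(X';Z)$ (with the composition and metric constraints) is convex in $P_{Z|Y}$; then a convex function on the polytope of stochastic matrices attains its maximum at a vertex, i.e.\ at a deterministic $f$. Taking the outer maximum over the finitely many deterministic $f$ and then over $P_X$ (or interchanging, since the set of deterministic functions is finite) yields a single deterministic $f^\star$ that is simultaneously optimal.

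The crucial step is establishing the convexity of $g(P_{Z|Y})\triangleq \min_{P_{X'|Z}:\,P_{X'}=P_X,\,\EE q(X',Z)\geq \EE q(X,Z)} I(X';Z)$ in the kernel $P_{Z|Y}$. Here I would use the standard identity that, for a fixed joint law of $(X,Z)$, the minimizing $P_{X'|Z}$ only sees $(X,Z)$ through its marginal $P_Z$ and the constraint set, and that $I(X';Z)$ evaluated at the minimizer can be rewritten via the "LM-rate" dual form as an expression that is jointly convex in the pair $(P_X\times W\times P_{Z|Y})$ — in fact linear in $P_{XZ}$ inside a $D(\cdot\|\cdot)$-type term, which is convex. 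An alternative, cleaner route: invoke the known fact (used implicitly around \eqref{def:RLM_PX}–\eqref{def:RLM}) that $R^{pre}_{LM,q}(P_X,\cdot)$, as a mismatch-LM rate of a channel, is convex in that channel; then compose with the linear map $P_{Z|Y}\mapsto P_{Z|X}$, preserving convexity. I would spell out whichever of these is shortest given the paper's conventions.

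I would then assemble the argument: for each fixed $P_X$, convexity of $g$ on the stochastic-matrix polytope gives $\max_{P_{Z|Y}} g(P_{Z|Y}) = \max_{f\ \mathrm{deterministic}} g(P_{Z|Y}=f)$; hence
\begin{flalign}
\max_{P_{Z|Y}} R^{pre}_{LM,q}(W,P_{Z|Y}) &= \max_{P_{Z|Y}}\max_{P_X} g_{P_X}(P_{Z|Y}) = \max_{P_X}\max_{f}\, g_{P_X}(f) \nonumber\\
&= \max_{f}\max_{P_X} g_{P_X}(f) = \max_{f}\, R^{pre}_{LM,q}(W,f),\nonumber
\end{flalign}
where swapping the two maxima is legitimate because the $f$-maximum is over a finite set. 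Picking an $f^\star$ attaining the last maximum (and noting the supremum defining $R^{pre}_{LM,q}(W,P_{Z|Y})$ over $P_X$ is attained by compactness/continuity) gives a deterministic maximizer, proving the theorem.

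The main obstacle I anticipate is the convexity claim in the second paragraph: one must be careful that after the inner minimization over $P_{X'|Z}$ the resulting function is genuinely convex in $P_{Z|Y}$, since a minimum of convex functions need not be convex in general. The safe way around this is to keep $P_X$ fixed and use the dual/Lagrangian representation of the LM rate, where the minimization turns into a maximization over a Lagrange multiplier and a reference distribution, making the whole object a pointwise maximum of functions that are each convex (indeed, of KL-divergence type) in $P_{Z|Y}$ — and a pointwise maximum of convex functions is convex. I would therefore route the proof through that dual form rather than the primal minimization.
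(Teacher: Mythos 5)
Your proposal is correct, and its skeleton is the same as the paper's: fix $P_X$, show that the LM rate is convex as a function of the kernel $P_{Z|Y}$, and conclude that the maximum over the polytope of stochastic matrices from $\calY$ to $\calY$ is attained at an extreme point, i.e., at a deterministic $f$ (your explicit exchange of the finite maximum over deterministic $f$ with the maximum over $P_X$ is a slightly cleaner way of handling the outer $\max_{P_X}$ than the paper's implicit "pointwise max of convex functions is convex", and your phrasing "extreme points" is more precise than the paper's "boundary"). The genuine difference is how the key convexity claim is proved. The paper stays in the primal: it takes the minimizers $P^{(1)}_{X'Z'}$ and $P^{(2)}_{X'Z'}$ attaining $\psi(P^{(1)}_{Z|Y})$ and $\psi(P^{(2)}_{Z|Y})$, forms their $\alpha$-mixture, verifies that this mixture is feasible for the problem associated with $\alpha P^{(1)}_{Z|Y}+(1-\alpha)P^{(2)}_{Z|Y}$ (both marginal constraints and the metric constraint $\EE q(X',Z')\geq \EE_{P_{XZ}}q(X,Z)$ are affine in $P_{Z|Y}$), and then invokes convexity of mutual information in $P_{Z'|X'}$ for fixed $P_{X'}$; this "average the minimizers and check feasibility" argument is exactly what dissolves the worry you raise about a minimum of convex functions not being convex, with no duality needed. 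Your alternative route — the dual/Lagrangian form of the LM rate, under which the rate is a supremum of functionals affine in $P_{Z|X}$ and hence, composing with the linear map $P_{Z|Y}\mapsto P_{Z|X}$, affine in $P_{Z|Y}$ — is also sound and arguably shorter, but it requires importing (or re-deriving) the dual LM-rate representation and its strong duality, whereas the paper's primal argument is self-contained and elementary. Either way the theorem follows; I see no gap in your plan, only the need to actually write out whichever convexity argument you choose.
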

To prove the theorem we show that $R_{LM,q} ^{pre} (W,P_{Z|Y})$ is convex in $P_{Z|Y}$ and thus, achieves its maximum on the boundary of its support, therefore there exists a maximizing $P_{Z|Y}$ which is deterministic (see \ifthenelse{\boolean{bol_FULL_ARTICLE_VERSION}}{Appendix A).}{\color{blue}\color{black}\color{black}\cite[Appendix A]{ThisArticle_arxiv}),\color{black}}

\ifthenelse{\boolean{bol_FULL_ARTICLE_VERSION}}{\color{violet}\color{black}
\section{An Optimal Channel for a Given metric}\label{sc: oc}
Finding the optimal pre-processing function given a channel and a decoding metric in the symbolwise case can be viewed as a problem of finding the optimal channel under certain constraints\footnote{While we are free to choose $P_{Z|Y}$, $P_{Z|X}$ must satisfy (\ref{38}).}, for a given decoder.
This is since, as mentioned above, the concatenation of a channel and pre-processing function can be viewed as a new channel. 
In this section we analyze the unconstrained counterpart of the optimization problem for a channel with a given metric $q$ when the input alphabet $\calX$ is binary, and the output alphabet $\calY$ is a given finite set. 
	 \begin{theorem}\label{th:opt_ch2l}
The optimal channel for given alphabets $\calX,\calY$ which satisfy $|\calX|=2$ and $|\calY|<\infty$, is given by:
\begin{flalign}\label{eq:opt_ch2l}
    &W_{opt} =\spll \begin{cases} W_0 &\mbox{if } q(0,y_2)+q(1,y_1) < q(0,y_1) +q(1,y_2) \\
W_1 & \mbox{if } q(0,y_2)+q(1,y_1) > q(0,y_1) +q(1,y_2) \end{cases} 
.\end{flalign}
where:
\begin{flalign}
    W_{0}(y|x) &= \begin{cases} 1 &\mbox{if } (x,y) \in \left\{(0,y_1),(1,y_2)  \right\} \\
0 & \mbox{o.w. }  \end{cases} 
    \\W_{1}(y|x) &= \begin{cases} 1 &\mbox{if } (x,y) \in \left\{(0,y_2),(1,y_1)  \right\} \\
0 & \mbox{o.w. }  \end{cases} 
.\end{flalign}

If $q(0,y_2)+q(1,y_1) \neq q(0,y_1) +q(1,y_2)$ then
\begin{flalign}
    C_q(W_{opt}) = C(W_{opt}) = 1.
\end{flalign}
 \end{theorem}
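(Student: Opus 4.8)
The plan is to reduce the claim to Example \ref{sbsc: example_A} (the binary-channel example) by a symmetry/exhaustion argument over the possible deterministic channels from a binary $\calX$ to a finite $\calY$. First I would observe that since $|\calX|=2$, any deterministic channel $W$ is fully specified by the pair $(W(\cdot|0),W(\cdot|1))=(y_0,y_1)\in\calY\times\calY$, i.e. by the outputs assigned to the two inputs. If $y_0=y_1$ the channel is useless, so we may restrict to $y_0\neq y_1$; and if we only care about the quantities $C_q(W)$ and $C(W)$, the channel is effectively a binary-input/binary-output channel on the relevant two output symbols $\{y_0,y_1\}$, because the decoder never sees any other output symbol and the relevant metric values are only $q(x,y_0)$ and $q(x,y_1)$. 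So it suffices to optimize over choices of an unordered pair $\{y_1,y_2\}\subseteq\calY$ and, for each such pair, over the two assignments $W_0,W_1$ given in the statement.

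Next, for a fixed pair $\{y_1,y_2\}$ with $y_1\neq y_2$, I would apply the formula \eqref{eq:bin_chan1}–\eqref{eq:bin_chan2} from \cite[Example 6]{CsiszarNarayan95}. For the noiseless binary channel obtained by identifying $\{y_0,y_1\}$ with $\{0,1\}$, we have $W(0|1)+W(1|0)=0<1$, so \eqref{eq:bin_chan1} applies: the mismatch capacity equals $C(W)=1$ if the relevant metric inequality is strict, and $0$ otherwise. Translating the condition $q(0,1)+q(1,0) < q(0,0)+q(1,1)$ of \eqref{eq:bin_chan1} through the identification used in $W_0$ (input $0\mapsto y_1$, input $1\mapsto y_2$) gives exactly $q(0,y_2)+q(1,y_1) < q(0,y_1)+q(1,y_2)$, which is the defining condition for $W_{opt}=W_0$ in \eqref{eq:opt_ch2l}; the complementary strict inequality, via the identification in $W_1$ (input $0\mapsto y_2$, input $1\mapsto y_1$), gives $C_q(W_1)=C(W_1)=1$. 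Hence whenever $q(0,y_2)+q(1,y_1)\neq q(0,y_1)+q(1,y_2)$, the channel $W_{opt}$ defined in \eqref{eq:opt_ch2l} satisfies $C_q(W_{opt})=C(W_{opt})=1$, proving the final display. For optimality among all channels, note that no channel on alphabets with $|\calX|=2$ can have $C(W)>1$ (since $C(W)\le\log|\calX|=1$ in nats... using $\log_2$, $\le 1$ bit), so achieving $C_q(W)=1$ is best possible; and for any other deterministic pair or any stochastic channel, $C_q$ cannot exceed this value, so $W_{opt}$ is indeed optimal.

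The main obstacle I anticipate is making the "effectively a binary-output channel" reduction fully rigorous: one must argue that restricting the output alphabet to $\{y_1,y_2\}$ loses nothing, i.e. that an optimal channel can be taken deterministic with a two-point image, and that output symbols outside the image are irrelevant to both $C_q$ and $C$. This is intuitively clear — unused outputs carry no information and do not enter the decoder's metric comparisons — but the cleanest way to handle it is to note that for $|\calX|=2$ the Shannon capacity $C(W)\le 1$ with equality iff the two conditional distributions $W(\cdot|0),W(\cdot|1)$ have disjoint supports, and then observe that among all such capacity-$1$ channels the mismatched decoder's success probability depends only on the sign of $q(x,y)-q(x',y)$ averaged against $W$, which for disjoint-support channels is governed precisely by the two-symbol comparison above. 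A secondary (minor) point is the tie case $q(0,y_2)+q(1,y_1)=q(0,y_1)+q(1,y_2)$, which the theorem statement correctly excludes from the final conclusion; there $C_q=0$ for both $W_0,W_1$ and indeed for every deterministic channel on this pair, consistent with $q$ being useless when this equality holds for all pairs (cf. Theorem \ref{th:useless}).
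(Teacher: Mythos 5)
Your proposal is correct and follows essentially the same route as the paper: construct the noiseless channel sending the two inputs to $y_1,y_2$ with the orientation determined by the sign of $q(0,y_1)+q(1,y_2)-q(0,y_2)-q(1,y_1)$, invoke the binary-channel formula \eqref{eq:bin_chan1}--\eqref{eq:bin_chan2} from \cite[Example 6]{CsiszarNarayan95} to get $C_q(W_{opt})=C(W_{opt})=1$, and conclude optimality from $C_q(\widetilde W)\le C(\widetilde W)\le 1$ bit for any binary-input $\widetilde W$. The ``reduction to an effectively binary-output channel'' step you worry about is not actually needed, since optimality already follows from that trivial $1$-bit upper bound, which is exactly how the paper closes the argument.
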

 For the full proof see 
 \ifthenelse{\boolean{bol_FULL_ARTICLE_VERSION}}{\color{violet}\color{black}Appendix C.}{\color{blue}\color{black}\color{black}\cite[Appendix C]{ThisArticle_arxiv}\color{black}}
 \color{black}}{}
\section{A Vectorwise Pre-Processing Function}\label{sc: vpf}		
In this section we discuss the optimal pre-processing function in the case where
the output $\bz$ is constructed by the pre-processing function based on the entire channel output vector of $\by$.
This assumption in real-life applications translates to allowing a big enough latency between the input $\bx$ and output $\bz$ and a sufficiently large memory in the pre-processing function module in order to store the entire vector $\by$.\\
An extended setting compared to the one discussed here was studied in \cite{dikshtein2023mismatched} for the relay channel, the study \cite{dikshtein2023mismatched} established a lower bound on the channel capacity for a discrete or continuous Gaussian fading channel with mismatched decoder and a relay unit and provided a computationally efficient algorithm for calculating this bound.\\
We denote by $P^{vec}_e(W,\calC,q,f_n)$ the error probability achieved by using a possibly stochastic pre-processing vectorwise function $f_n$ from $\calY^n$ to $\calY^n$, 
prior to the decoder $q$, and denote the capacity for this setup as $C_q^{pre,vec}(W)$. 
\\We next define two important notions related to vectorwise pre-processing. 

\begin{definition}
\label{def: sep_ppf} 
We say that a separation principle exists for vectorwise pre-processing in $q$-mismatched channel coding, if there exists a sequence of vectorwise pre-processing functions $f_n$, $n\geq 1$,  each composed of a decoder $\varphi_n$ from $\calY^n$ to $\{1,...,M_n\}$ followed by a (possibly stochastic) mapping $g_n$ from $\{1,...,M_n\}$ to $\calY^n$ that achieves $C_q^{pre,vec}(W)$.
A pre-processing function which has this structure, will be referred to as a decode-and-process function.
\end{definition}
We next show that the separation principle holds for every channel-metric pair $(W,q)$. First we show using decode-and-process pre-processing functions one can achieve the same mismatch capacity as that of using general pre-processing functions, and subsequently we show that choosing the pre-processing function to also be deterministic is asymptotically optimal.
\begin{lemma}\label{lem:1sep}
Given codebook $\calC_n$ and a vectorwise pre-processing function $\widetilde{f}_n$ s.t $P^{vec}_e(W,\calC,q,\widetilde{f}_n) \leq \epsilon_n $, there exists a decode-and-process pre-processing function $f_n$ for which $P^{vec}_e(W,\calC_n,q,f_n) \leq 2\epsilon_n$.
 \end{lemma}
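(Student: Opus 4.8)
The plan is to show that given the "black-box" vectorwise pre-processing function $\widetilde f_n$ with small error probability, one can construct a decode-and-process function $f_n$ that essentially mimics it while paying at most a factor of two in error probability. The key observation is that running the $q$-decoder on the output of $\widetilde f_n$ is already a mapping from $\calY^n$ to $\{1,\dots,M_n\}$; call it $\varphi_n \triangleq \widehat m_q \circ \widetilde f_n$. This $\varphi_n$ is precisely the decoder stage of a decode-and-process function. The idea is then to let $g_n$ "regenerate" a channel-output vector that will be correctly decoded by the downstream $q$-decoder, namely $g_n(k) = \bz^\star_k$ where $\bz^\star_k \in \calY^n$ is any vector for which $\widehat m_q(\bz^\star_k) = k$ (such a vector exists whenever message $k$ is "$q$-decodable" at all; if not, the choice of $g_n(k)$ is irrelevant since $\varphi_n$ will essentially never output $k$, or we can just fix an arbitrary value). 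Define $f_n = g_n \circ \varphi_n$; this is a (deterministic, even) decode-and-process pre-processing function.

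The main steps, in order. First I would fix the codebook $\calC_n$ and, for each transmitted message $m$, decompose the error event of the composed system $f_n$. A decoding error under $f_n$ occurs iff $\widehat m_q(f_n(\bY)) \ne m$, i.e. iff $\widehat m_q(g_n(\varphi_n(\bY))) \ne m$. By construction of $g_n$, $\widehat m_q(g_n(k)) = k$ for every $k$ in the range of decodable messages, so this event is (up to the degenerate messages) exactly the event $\{\varphi_n(\bY) \ne m\} = \{\widehat m_q(\widetilde f_n(\bY)) \ne m\}$ — which is exactly the error event of the original system using $\widetilde f_n$. Averaging over $m$, the error probability of $f_n$ equals that of $\widetilde f_n$, which is $\le \epsilon_n \le 2\epsilon_n$. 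So in fact the bound holds with the cleaner constant; the factor $2$ in the lemma gives slack to handle the degenerate messages and the tie-breaking convention (an error is also declared when there are multiple maximizers in the $\argmax$ of \eqref{eq:argmax}), which I would absorb as follows: if $g_n(k)$ cannot be chosen to make $\widehat m_q$ output $k$ unambiguously, replace the contribution of those messages by their worst case, and bound the total added probability by the probability that $\varphi_n$ lands on such a "bad" message, which is itself bounded by the error probability of $\widetilde f_n$ (since on those messages $\varphi_n(\bY)\ne m$ already). That yields the stated factor $2$.

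The step I expect to require the most care is the handling of the $\argmax$ tie-breaking and of messages $k$ for which no $\bz_k^\star$ with a \emph{strict} unique maximizer exists: one must make sure that regenerating $\bz_k^\star = \bx_k$ itself (the $k$-th codeword) is a safe universal choice — indeed $q(\bx_k, \bx_k)$ need not be the strict maximum over the codebook in general, so a cleaner choice is needed, or one argues that whenever $\widetilde f_n$ succeeds on message $m$ the vector $\widetilde f_n(\bY)$ already \emph{is} such a good vector and we may simply take $g_n(m) = \widetilde f_n(\by^\star)$ for some fixed representative $\by^\star$ in the (nonempty) set of channel outputs on which $\widetilde f_n$ succeeds for $m$. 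This makes $g_n$ well-defined precisely on the "usable" messages and keeps the error bookkeeping transparent. Everything else is routine union-bounding over the two contributing events.
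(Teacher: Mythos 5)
Your construction is correct, and it actually proves something slightly stronger than the lemma, but it is genuinely different from the paper's proof. The paper takes $\varphi_n$ to be the \emph{ML decoder} for $W$, and lets $g_n$ re-encode the ML estimate $\widetilde{m}$ with the same codebook $\calC_n$, pass it through a locally simulated copy of the channel $W$, and then apply $\widetilde{f}_n$; the error analysis is a union bound over two events: $\{\widetilde{m}\neq m\}$, bounded by $\epsilon_n$ via optimality of ML (the ML decoder cannot do worse than the chain $\widehat{m}_q\circ\widetilde{f}_n$), and $\{\widehat{m}\neq\widetilde{m}\}$, bounded by $\epsilon_n$ because the simulated chain is distributed exactly as the original system — hence the factor $2$. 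You instead take $\varphi_n=\widehat{m}_q\circ\widetilde{f}_n$ (the given receiver chain itself) and a deterministic $g_n$ mapping each message to a fixed point of its $q$-decision region; since correct unambiguous decoding of $m$ in the original system witnesses that $\calD_m$ is nonempty, your error event is \emph{contained} in the original one, giving the bound with constant $1$ rather than $2$, and no channel simulation or appeal to ML optimality is needed. In effect you also fold in the paper's subsequent lemma (that the processing stage can be made deterministic), since your $g_n$ is already a decision-region representative map.

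One caveat to patch: when $\widetilde{f}_n$ is stochastic (which the setup allows), your $\varphi_n=\widehat{m}_q\circ\widetilde{f}_n$ is a stochastic decoder, whereas Definition \ref{def: sep_ppf} places the ``possibly stochastic'' allowance only on $g_n$ — the paper's ML-based $\varphi_n$ sidesteps this automatically. The fix is routine: view the stochastic $\widetilde{f}_n$ as a random choice of deterministic map, note that the average error over this randomness is $P^{vec}_e(W,\calC_n,q,\widetilde{f}_n)\leq\epsilon_n$, and fix a realization whose error is at most the average before building $\varphi_n$; with that sentence added, your argument is complete (your parenthetical claim that $f_n$ is ``deterministic, even'' holds only after this derandomization).
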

 \begin{proof}
Let $\calC_n, q$ and $\widetilde{f}_n$ from $\calY^n$ to $\calY^n$ be given, such that $P^{vec}_e(W,\calC_n,q,\widetilde{f}_n)  \leq\epsilon_n$. 
We next introduce a decode-and-process pre-processing function $f_n$ from $\calY^n$ to $ \calY^n$, consisting of two parts.
The first component is the ML decoder for channel $W$, and the second is an encoder $g$, which consists of the following parts: (A) the  encoder induced by $\calC_n$, the same codebook which is used by the transmitter, (B) the DMC $W$, and (C) the given vectorwise pre-processing function $\widetilde{f}_n$.\\
Denote by $m$,  $\widetilde{m}$, and $\hat{m}$, the transmitted message, the output of the ML decoder, and the output of the decoder $q$, respectively.
This setup is depicted in Figure \ref{fig:seperate}.

\begin{figure}[h]
		\centering
		\includegraphics[width=0.48\textwidth,height=\textheight,keepaspectratio]{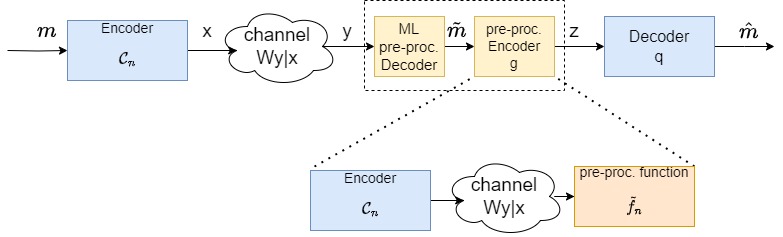}
		\caption[this figure illustrates the scheme for constructing the pre-processing function $f_n$ ]{This figure illustrates the construction for lemma 1}
		\label{fig:seperate}
	\end{figure} 

By optimality of the ML decoder, we can deduce that for the suggested scheme 
$P(\widetilde{m} \neq m) \leq P^{vec}_e(W,\calC_n,q,\widetilde{f}_n)  \leq \epsilon_n .$
  Since this choice of pre-processing function $\widetilde{f}_n$ is identical to the stochastic mapping from $\widetilde{m}$ to $\widehat{m}$, we get that 
 $P(\widehat{m} \neq \widetilde m) \leq P^{vec}_e(W,\calC_n,q,\widetilde{f}_n)  \leq \epsilon_n,$
 and consequently:
 \begin{flalign}
 P(\widehat{m} \neq m) \leq P(\widehat{m} \neq \widetilde m)+P(\widetilde{m} \neq m) \leq 2\epsilon_n.
 \end{flalign}
  Overall we get that given a vectorwise pre-processing function $\widetilde{f}_n$, we can use the suggested decode-and-process function $f_n$ described above, s.t $P^{vec}_e(W,\calC_n,q,f_n)  \leq 2\epsilon_n$.
 \end{proof}
A direct result from lemma \ref{lem:1sep} is that given a sequence of vectorwise pre-processing functions $\{\widetilde{f}_n\}$ for which $ \lim_{n\to\infty} P^{vec}_e(W,\calC_n,q,\widetilde{f}_n) = 0  $ for a given sequence of codebooks $\{\calC_n\}$, there is a sequence of decode-and-process pre-processing functions $\{f_n\}$ that, for the same codebook sequence, satisfies $ \lim_{n\to\infty}P^{vec}_e(W,\calC_n,q,f_n) = 0$.
Further, we show that deterministic decode-and-process is at least as good as decode-and-process with stochastic pre-processing.

 \begin{lemma}
For any decode-and-process pre-processing function 
$\widetilde{f}_n$ there exists a deterministic decode-and-process pre-processing function $f_n$ s.t $P^{vec}_e(W,\calC,q,f_n)  \leq P^{vec}_e (W,\calC,q,\widetilde{f}_n) $.
 \end{lemma}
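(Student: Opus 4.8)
The plan is to condition on the random bits used by the stochastic processing stage and to apply an averaging argument. Recall that a decode-and-process function $\widetilde{f}_n$ consists of a fixed $q$-decoder... no, wait — it consists of a fixed ML decoder $\varphi_n$ for $W$ followed by a stochastic mapping $g_n$ from $\{1,\dots,M_n\}$ to $\calY^n$. Since only $g_n$ carries randomness, write $g_n$ as a family $\{g_n^{(u)}\}_{u\in\calU}$ of deterministic mappings from $\{1,\dots,M_n\}$ to $\calY^n$, where $U$ is the internal randomization variable with some distribution $P_U$ on a (without loss of generality finite) set $\calU$; that is, $g_n(\cdot\,|\,w)=\sum_{u}P_U(u)\,\mathbbm{1}\{g_n^{(u)}(w)=\cdot\}$. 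This is just the standard representation of a channel as a mixture of deterministic maps.

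Next I would expand the error probability. With $\calC_n$, $W$, $q$ fixed, and with $\widetilde m=\varphi_n(\bY)$ the (deterministic) ML estimate, the processed output is $\bZ=g_n^{(U)}(\widetilde m)$ and the final estimate is $\widehat m=\widehat m_q(\bZ)$. Because $U$ is independent of everything upstream, the law of total probability gives
\begin{flalign}
P^{vec}_e(W,\calC_n,q,\widetilde f_n)=\sum_{u\in\calU}P_U(u)\,P^{vec}_e\big(W,\calC_n,q,f_n^{(u)}\big),\nonumber
\end{flalign}
where $f_n^{(u)}$ is the decode-and-process function consisting of the same ML decoder $\varphi_n$ followed by the deterministic mapping $g_n^{(u)}$. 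Each $f_n^{(u)}$ is a \emph{deterministic} decode-and-process pre-processing function.

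Finally I would invoke the elementary fact that a weighted average of a collection of numbers is at least as large as the smallest member: there exists $u^\star\in\calU$ with
\begin{flalign}
P^{vec}_e\big(W,\calC_n,q,f_n^{(u^\star)}\big)\leq\sum_{u\in\calU}P_U(u)\,P^{vec}_e\big(W,\calC_n,q,f_n^{(u)}\big)=P^{vec}_e(W,\calC_n,q,\widetilde f_n).\nonumber
\end{flalign}
Setting $f_n:=f_n^{(u^\star)}$ gives the claim. There is no real obstacle here; the only points requiring a word of care are (i) justifying that the processing channel $g_n$ can be represented as a mixture of finitely many deterministic maps (it acts on the finite alphabet $\{1,\dots,M_n\}\to\calY^n$, so the extreme points of the relevant polytope are the deterministic maps and Carath\'eodory/finiteness makes the mixture finite), and (ii) the independence of the internal randomization $U$ from the channel noise and the message, which is what lets the total-probability decomposition go through. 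Both are routine, so the proof is short.
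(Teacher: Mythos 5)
Your proof is correct, but it proceeds differently from the paper's. The paper argues constructively: keeping the first-stage decoder, it maps each decoded index $\widetilde m$ deterministically to some $\bz\in\calD_{\widetilde m}$, the $q$-decision region of $\bx_{\widetilde m}$, so that the second stage becomes error-free and the overall error reduces to the first-stage decoding error (this needs the decision regions to be non-empty, which the paper notes as an assumption). You instead derandomize generically: represent the stochastic map $g_n$ as a finite convex mixture of deterministic maps $\{g_n^{(u)}\}$ with mixing variable $U$ independent of the message and the channel noise, write $P^{vec}_e(W,\calC_n,q,\widetilde f_n)=\sum_u P_U(u)\,P^{vec}_e(W,\calC_n,q,f_n^{(u)})$, and select the best component $u^\star$. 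Each route has its merits. The paper's construction is explicit and exhibits the scheme actually used to achieve $C_q^{pre,vec}(W)$ in Theorem \ref{th: vec_sep}, with the clean interpretation that all residual error comes from the first stage; your argument is non-constructive but more robust: it makes no use of the structure of the $q$-decoder, needs no non-emptiness assumption on the $\calD_m$, and dominates an arbitrary stochastic second stage even in the corner case where the randomization could partially ``correct'' first-stage errors, a case the paper's particular deterministic choice need not beat (although some deterministic map always does, which is exactly what your averaging argument certifies). One small point: Definition \ref{def: sep_ppf} allows an arbitrary first-stage decoder $\varphi_n$, not necessarily ML; your argument applies verbatim with any fixed $\varphi_n$, so the initial self-correction about which decoder sits in front is immaterial.
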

 \begin{proof}
 Consider the decision region $\calD_m$ induced by the codebook $\calC_n=\{\bx_i\}$ and metric $q$ which corresponds to message index $m$; i.e., the set of all $\by$ such that  $q(\bx_m,\by)>q(\bx_j,\by)$ $ \forall j \neq m$.
 Since the decision regions $\{\calD_m\}$ are known and disjoint by definition, an optimal decode-and-process pre-processing function can be chosen s.t, each codeword $\widetilde{m}$ is mapped to a $\bz^n \in \calY^n$ which lies in $\calD_{\widetilde{m}}$, so that the decoded message $\widehat{m}$ always equals $ \widetilde{m}$ (assuming all decision regions are non-empty). 
 \end{proof}
Combining the last two lemmas we get:
\begin{theorem}\label{th: vec_sep}
For any $W$ there exists a deterministic decode-and-process pre-processing function $f_n : \calY^n \rightarrow \calY^n$ that achieves $C_q^{pre,vec}(W)$.

 \end{theorem}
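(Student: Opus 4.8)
The plan is to combine the two preceding lemmas, which together give exactly the two ingredients required by Definition~\ref{def: sep_ppf}, and then verify that no rate is lost in the process. First I would recall the definition of $C_q^{pre,vec}(W)$: it is the supremum of all rates $R$ for which there is a sequence of codebooks $\calC_n$ with $|\calC_n|>e^{nR}$ and a sequence of (arbitrary, possibly stochastic) vectorwise pre-processing functions $\widetilde{f}_n$ such that $P^{vec}_e(W,\calC_n,q,\widetilde{f}_n)\to 0$. Fix such an achievable rate $R$ and a sequence $(\calC_n,\widetilde{f}_n)$ witnessing it, so $P^{vec}_e(W,\calC_n,q,\widetilde{f}_n)\le\epsilon_n$ with $\epsilon_n\to 0$.

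The main work is just a two-step replacement. Step one: apply Lemma~\ref{lem:1sep} to each $n$. This yields, for the \emph{same} codebook sequence $\calC_n$, a sequence of decode-and-process pre-processing functions $f_n'$ with $P^{vec}_e(W,\calC_n,q,f_n')\le 2\epsilon_n\to 0$; hence $R$ is still achievable using decode-and-process functions. Step two: apply the second (unnumbered) lemma to each $f_n'$, obtaining a \emph{deterministic} decode-and-process function $f_n$ with $P^{vec}_e(W,\calC_n,q,f_n)\le P^{vec}_e(W,\calC_n,q,f_n')\le 2\epsilon_n\to 0$. Since the codebooks are unchanged throughout, $|\calC_n|>e^{nR}$ still holds, so every rate achievable with an arbitrary vectorwise pre-processing function is also achievable with a deterministic decode-and-process function. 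Conversely, a deterministic decode-and-process function is a special case of a vectorwise pre-processing function, so the supremum of achievable rates under the restricted class equals $C_q^{pre,vec}(W)$. Thus the class of deterministic decode-and-process functions achieves $C_q^{pre,vec}(W)$, which is precisely the assertion of Theorem~\ref{th: vec_sep} (and, a fortiori, establishes that the separation principle of Definition~\ref{def: sep_ppf} holds).

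One small point I would want to address cleanly is the mild edge case flagged in the proof of the second lemma — the assumption that all decision regions $\calD_m$ are non-empty. If some $\calD_m=\emptyset$, message $m$ can never be correctly decoded under metric $q$ regardless of pre-processing, so such codebooks already have $P^{vec}_e\ge 1/M_n$ bounded away from $0$ and cannot be part of a sequence witnessing a positive rate; alternatively one simply discards those codewords, losing a negligible fraction of the rate. I would state this in a sentence so the deterministic assignment ``$\widetilde m\mapsto$ some point of $\calD_{\widetilde m}$'' is well-defined along any sequence actually achieving a rate $R>0$. I do not anticipate a genuine obstacle here — the theorem is essentially a corollary of the two lemmas — the only care needed is bookkeeping: making sure the codebook sequence is held fixed across both replacements so the rate is not eroded, and handling $R=0$ trivially.
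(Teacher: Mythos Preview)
Your proposal is correct and matches the paper's approach exactly: the paper states Theorem~\ref{th: vec_sep} with the single line ``Combining the last two lemmas we get,'' and your write-up simply fills in the obvious bookkeeping behind that sentence. One minor slip: the bound $P^{vec}_e\ge 1/M_n$ is not ``bounded away from $0$'' since $M_n\to\infty$, but your alternative handling of the empty-decision-region edge case (or simply mapping $\widetilde m$ arbitrarily when $\calD_{\widetilde m}=\emptyset$, since then no pre-processing can make the $q$-decoder output $\widetilde m$ anyway) resolves it.
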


\ifthenelse{\boolean{bol_FULL_ARTICLE_VERSION}}{\color{violet}\color{black}
\section{Conclusion}
\label{sec:conclusion}
The introduction of a pre-processing function makes it possible to improve on the achievable rates by adapting the channel output to better fit the mismatched decoder.
The pre-processing function could be either a vectorwise function or a symblowise function, and in this paper each of these cases is treated separately.
 For the vectorwise function, we presented a separation principle and provided a scheme for constructing a deterministic decode-and-process pre-processing function that achieves the mismatch capacity for a channel with pre-processing function.
 For the scalar channel, two single-letter lower bounds for $C_q^{pre}(W)$ were derived,
 the first bound employs a constant composition random coding scheme, and the improved bound is achieved by implementation of the superposition coding technique. A condition for the existence of a pre-processing function that achieves $C_q^{pre}(W) >0$ is derived for specific cases as well as for the case of the general DMC.
 Additionally, examples for cases where such a function exists and cases where it does not exist are provided. 
 In the case of a symbolwise pre-processing function, although the existence of a deterministic optimal function is shown for several examples, a definitive answer to whether for the general case there exists a deterministic pre-processing function that achieves the mismatch capacity, is not provided in this work.
\color{black}}{}

\ifthenelse{\boolean{bol_FULL_ARTICLE_VERSION}}{\color{violet}\color{black}
\appendix
\subsection{Proof of Theorem \ref{th:deter_RLM}}\label{appA}
For convenience denote in short:
\begin{flalign}
\psi &(P_{Z|Y}) \triangleq R^{pre}_{LM,q}(W,P_{Z|Y},P_X)= \spll \min_{\substack{P_{X'Z'}:P_{X'}=P_X,P_{Z'}=P_Z   \\ \EE \left[q(X',Z') \right] \geq \EE_{P_{XZ}} \left[q(X,Z)\right]}}I(X';Z').
 \end{flalign}
 We first show that $\psi$ is convex in $P_{Z|Y}$. Let $\alpha\in(0,1)$ be given and denote $\overline{\alpha}=1-\alpha$. Further, let 
 $P^{(\alpha)} _{Z|Y} =\alpha P^{(1)} _{Z|Y}+\bar{\alpha} P^{(2)} _{Z|Y}$, and for $\gamma \in \{1 , 2 , \alpha \}$ denote 
 \begin{flalign}
     &P^{(\gamma)} _{ZX}(z,x) = \Sigma_{y \in \calY} P^{(\gamma)} _{Z|Y}(z|y) W_{Y|X}(y|x)P_{X}(x) .
 \end{flalign} Thus,
 \begin{flalign}
\alpha \psi(P^{(1)} _{Z|Y})+&\bar{\alpha}\psi(P^{(2)} _{Z|Y}) = \spll \alpha \min_{\substack{P_{X'Z'}:P_{X'}=P_X,P_{Z'}=P^{(1)}_Z   \\ \EE \left[q(X',Z) \right] \geq \EE_{P_{ZX}^{(1)}} \left[q(X,Z)\right]}}I(X';Z') +\spll 
 \bar{\alpha} \min_{\substack{P_{X'Z'}:P_{X'}=P_X,P_{Z'}=P^{(2)}_Z   \\ \EE \left[q(X',Z) \right] \geq \EE_{P_{ZX}^{(2)}} \left[q(X,Z)\right]}}I(X';Z') .
 \end{flalign}
 We denote the corresponding minimizers by $P^{(1)} _{X'Z'} $ and $P^{(2)} _{X'Z'}$ respectively, thus:
  \begin{flalign}
\alpha \psi(P^{(1)} _{Z|Y})+\bar{\alpha}\psi(P^{(2)} _{Z|Y}) &= \alpha I(P^{(1)} _{X'Z'}) + \bar{\alpha}I(P^{(2)} _{X'Z'}) \nonumber
 \end{flalign}
 and since for fixed $P_U$, $I(P_{UV})$ is convex in $P_{V|U}$:
 \begin{flalign}
\alpha \psi(P^{(1)} _{Z|Y})+\bar{\alpha}\psi(P^{(2)} _{Z|Y}) & \geq I(\alpha P^{(1)} _{X'Z'} + \bar{\alpha}P^{(2)} _{X'Z'}).
 \end{flalign}
 Thus, to show convexity of $\psi$ we need to show that the chosen $P^{(\alpha)}_{X'Z'}$ lies in the set
 \begin{flalign}
 D^{\alpha} = \left\{ P_{Z'X'}: {\substack{P_{X'}=P_X,\\P_{Z'}=P^{(\alpha)}_Z   \\ \EE \left[q(X',Z') \right] \geq \EE_{P_{ZX}^{(\alpha)}} \left[q(X,Z)\right]}}       \right\},
 \end{flalign}
where 
 $P^{(\alpha)}_{XZ} = \alpha P^{(1)}_{XZ} +\bar{\alpha} P^{(2)}_{XZ}$.\\
 Since both $P^{(1)} _{X'Z'}, P^{(2)} _{X'Z'}$ satisfy $P^{(1)}_{X'} = P^{(2)}_{X'} = P_{X}$
 we get $P_{X'} = P^{(\alpha)}_{X'} = P_{X}$, and since  $P^{(1)} _{X'Z'}, P^{(2)} _{X'Z'}$ satisfy 
 $P_{Z'} = P^{(1)}_{Z}, P_{Z'} = P^{(2)}_{Z}$
 respectively, we get that $P_{Z'} = P^{(\alpha)} _{Z}$, thus, the first two constraints are fulfilled, and since $\EE_{P_{ZX}^{(\alpha)}}[q(X,Z)]$ is linear in $P_{Z|Y}^{(\alpha)}$, the third constraint is also fulfilled. Thus, since $\alpha P^{(1)} _{X'Z'} + \bar{\alpha}P^{(2)} _{X'Z'} \in D^\alpha ,$
 and\\ $\alpha \psi(P^{(1)} _{Z|Y})+\bar{\alpha}\psi(P^{(2)} _{Z|Y}) \geq \psi(\alpha P^{(1)} _{X'Z'} + \bar{\alpha}P^{(2)} _{X'Z'})$,
 we get that $R^{pre}_{LM,q}(W,P_{Z|Y})$ is convex in $P_{Z|Y}$ and therefore, achieves its maximum points on the boundary of its domain of definition, therefore the $P_{Z|Y}$ that yields the maximal $R^{pre}_{LM,q}(W,P_{Z|Y})$ is deterministic.
\subsection{Proof of Theorem \ref{th:useless}}\label{appB}
Let $\calX,\calY$ and $q:\calX\times\calY\rightarrow\mathbb{R}$ be given. Assume  also, that there exist $x_1,x_2 \in \calX$ and $y_1,y_2 \in \calY$ s.t 
$q(x_1,y_1)-q( x_2,y_1) \neq  q( x_1,y_2)-q( x_2,y_2).$
Denote:
\begin{flalign}
\kappa_q(& x_1,y_1,x_2,y_2) = \spll q(x_1,y_1)+q(x_2,y_2) - (q(x_1,y_2) +q(x_2,y_1))
\end{flalign}
and consider the channel:
\begin{flalign}
\overline{W}(i|j) = \begin{cases} 
1 &\mbox{if }\kappa_q( x_1,y_1,x_2,y_2) >0,\\& \ \ \  (i,j)\in \{(y_1,x_2),(y_2,x_1) \} \\
1 &\mbox{if } \kappa_q( x_1,y_1,x_2,y_2)<0,\\& \ \ \  (i,j)\in \{(y_1,x_1),(y_2,x_2) \} \\
0 &\mbox{if } j \in \{x_1, x_2\}, \  \  i\notin \{y_1,y_2\}\\
\dfrac{1}{|\calY|} &\mbox{if } j\notin \{x_1, x_2\} 
\end{cases}.
\end{flalign}
\color{blue}\color{black}
Note that using this channel exclusively for the transmission of symbols \( x \in \{x_1, x_2\} \) effectively converts it into a binary-input binary-output channel. 
In this scenario, the input and output alphabets are $\{x_1,x_2\}$ and $\{y_1,y_2\}$, respectively.
 \color{violet}\color{black}
Recall the formula given in (\ref{eq:bin_chan1})-(\ref{eq:bin_chan2}) for the mismatch capacity in the binary channel case.
\color{blue}\color{black} Thus $C_q(W) \geq 1$ for $W=\overline{W}$.\\\color{violet}\color{black}
It remains to show that if $ \forall x_1,x_2 \in \calX,\forall y_1,y_2 \in \calY$ s.t. $q( x_1,y_1)-q( x_2,y_1) = q( x_1,y_2)-q( x_2,y_2) $, 
then the metric $q$ is useless.\\ Let $\bx_0,\bx_1 \in \calX^r$
be two vectors of length $r$, and denote by $\bx_0(t), \bx_1(t)$ the entries of $\bx_0, \bx_1$ at time $t$, respectively, similarly $\by(t)$ will denote the entry of $\by$ at time $t$.
Define:
\begin{flalign}
K_{\gamma i}&=\{t:\; \bx_{\gamma}(t)=i\} :\gamma \in \{0,1\}, i \in \calX
\end{flalign}
Note that, summing over all possible $i \in \calX$, we get $ \sum_{i \in \calX}|K_{0i}|=\sum_{i \in \calX}|K_{1i}|=r$.\\For a given output vector $\by \in \calY^r$, we obtain:
\begin{flalign}
q&(\bx_0,\by)-q(\bx_1,\by) =\spll \sum_{y \in \calY , i \in \calX, t \in K_{0i}} q(i,y)\mathbbm{1}\{\by(t) = y\} \spll-  \sum_{y \in \calY , j \in \calX , t \in K_{1j}} q(j,y)\mathbbm{1}\{\by(t) = y\} =\\
& \sum_{\substack{y \in \calY , i \in \calX , j \in \calX ,\\ t \in K_{1j}\cap K_{0i}, j \neq i}} \left( q(i,y)-q(j,y) \right)\mathbbm{1}\{\by(t) = y\}
,\end{flalign}
where $\mathbbm{1}\{ \}$ is the indicator function.
From (\ref{eq:useless_cond}) we have that for any $x_1,x_2,y_1,y_2$, we get
$q(x_1,y_1)-q(x_2,y_1) = q(x_1,y_2) -q(x_2,y_2)  $,
and therefore $\forall y \in \calY $ the expression 
$q(x_1,y) -q(x_2,y)$ is independent of $y$.
Let us denote $ \delta_{q}(x_1,x_2) \triangleq q(x_1,y_1)-q(x_2,y_1) $, and this yields
\begin{flalign}
q&(\bx_0,\by)-q(\bx_1,\by) =\spll \sum_{\substack{ y \in \calY , i \in \calX , j \in \calX, \\ t \in K_{1j}\cap K_{0i}, j \neq i}} \delta_{q}(i,j)\mathbbm{1}\{\by(t) = y\}\\
&=  \sum_{i \in \calX , j \in \calX} \delta_{q}(i,j) \left|  K_{0i}\cap K_{1j}   \right|
\end{flalign}
Thus, the decoder's output does not depend on the output vector $\by^r$ and therefore does not allow passing of information, and we get $\forall W: C_q(W) = 0$ implying that it is a useless metric.

 \subsection{Proof of Theorem \ref{th:opt_ch2l}}
Assume a binary input alphabet and an $\ell$-ary output alphabet $|\calX|=2$, $|\calY|=\ell$ and a decoder with a metric $q:\calX \times \calY \rightarrow \reals$. Let us find a channel $W$ s.t $C_q(W)$ is maximal.\\
Assume $y_1, y_2 \in \calY$ s.t 
$
q(0,y_1)+q( 1,y_2) \neq  q( 1,y_1)+q( 0,y_2),
$
and denote
\begin{flalign}
  \widetilde{\kappa}_q(y_1, y_2) =  q(0,y_1)+q(1,y_2) -\left( q(0,y_2) +q(1,y_1)\right)
.\nonumber\end{flalign}
Consider the following channel:
\begin{flalign}\label{eq:chch}
W_{opt}(i|j) = \begin{cases} 
1 &\mbox{if }  \widetilde{\kappa}_q(y_1, y_2)>0,\mbox{ and }\spll \ \ \ (i,j)\in \{(y_1,1),(y_2,0) \} \\
1 &\mbox{if }  \widetilde{\kappa}_q(y_1, y_2) <0,\mbox{ and }\spll \ \ \ (i,j)\in \{(y_1,0),(y_2,1) \} \\
0 & \mbox{o.w.}  \end{cases} .
\end{flalign}
Note that the maximum capacity for a channel with binary input is $1$[bit/channel uses], and as seen in the proof of Theorem \ref{th:useless}, the mismatch capacity $C_q(W_{opt})$ equals $1$.
Thus, choosing this channel gives:
\begin{flalign}
    1= C_q(W_{opt}) \leq \max_{\widetilde{W}}C_q(\widetilde{W}) \leq \max_{\widetilde{W}}C(\widetilde{W}) = 1,
\end{flalign}
making this the optimal channel for the given decoder.
\color{black}}{}

\addcontentsline{toc}{chapter}{\sc Bibliography}
\bibliographystyle{IEEEtran}

\end{document}
